\title{Arc-routing for winter road maintenance}
\date{}
\begin{document}
\title{Arc-routing for winter road maintenance}
%
%
\author{Jiří Fink\inst{1,2}\orcidID{L-7284-2015} \and
Martin Loebl\inst{3,4}\orcidID{0000-0001-7968-0376} \and
Petra Pelikánová\inst{3}}
\authorrunning{J. Fink et al.}
%
\institute{Department of Theoretical Computer Science and Mathematical Logic, Charles University, Czech Republic
\email{fink@ktiml.mff.cuni.cz}\and
This research is conducted within the project Network Optimization (17-10090Y) supported by Czech Science Foundation\and
Department of Applied Mathematics, Charles University, Czech Republic
\and
supported by the H2020-MSCA-RISE project CoSP- GA No. 823748 
\email{loebl@kam.mff.cuni.cz}}

%
\maketitle              
\begin{abstract}
The winter road maintenance arc-routing is recognised as a notoriously hard problem not only from the algorithmic point of view. This paper lays down foundations of theoretical understanding of our new winter road maintenance optimization for the Plzen region of the Czech Republic which has been implemented by the regional authorities since the winter of 2019-20.
Our approach is not, contrary to most of existing work, based on the integer and linear programming machinery. We concentrate on studying arc-routing on trees. This is practical since routes of single vehicles can be well represented by trees, and allows algorithms and complementary hardness results. We then extend the approach to the bounded tree width graphs. This leads to considering the planar graphs which well abstract the realistic road networks.
 We formalise important aspects of the winter road maintenance problem which were not formalised before, e.g., public complaints. 
 The number of complaints from public against the winter road maintenance  is a quantitative measure of the quality of the service which is focused on, e.g., in media or in election campaigns. A fear of 'complaints' is a fact every optimizer must deal with. Hence, a formal model of public complaints and its inclusion in the optimization is vital.
 Our formalisation of the winter road maintenance is robust in the sense that it relates to well-know extensively studied concepts of discrete mathematics like graph cutting and splitting of necklaces. 

\keywords{arc routing  \and algorithms on trees \and necklace splitting.}
\end{abstract}
\newpage
\section{Introduction}
\label{s.int}

Our involvement started by being asked 

\medskip

{\em Can you improve routing for winter road maintenance in the Czech Republic and specifically in the Plzeň region.}

\medskip


\medskip

We were asked to create new routing for vehicles of winter road maintenance while minimizing the total number of used vehicles. 
There were many additional conditions that needed to be satisfied, in particular, conditions given by the Czech legislation. 
A fixed plan for one whole winter season had to be created. 

Our plan (described in \cite{FLP}) has been implemented by the Plzen region authorities starting the winter of 2019-20.
\medskip\noindent

{\bf Towards a model.} In the design of tours for vehicles in winter road maintenance, one needs to cover the graph of the road network by subgraphs and then one needs to design routing for each of these subgraphs by one vehicle. Each edge of the graph has attributes given by the {\em length}, the {\em priority} and the type of maintenance of the corresponding road segment. 
Some vertices serve as depots. Each such vertex has defined types of material which it can store. 

The road network has a {\em service priority} defined by the legislation based on traffic volume which partitions the roads into classes. For instance in the Czech Republic, there are three such classes: Arterial roads through regions have the highest level of service priority (1). Priority (2) is assigned to bus routes and other important routs. Third priority of service is assigned to local roads. 
 Each class of roads is associated with maximum time of maintenance completion. For instance, in the Czech Republic the edges in the first priority level have to be cleaned by a vehicle every three hours, in the second level every six hours and in the third level every twelve hours. 
 
 Next important issue is the {\em length of the working shift}. For instance, the standard length of the working shift of a maintenance driver is eight hours in the Czech Republic. Moreover, the Czech legislation requires multiple safety breaks for drivers during the working shift. It is natural to expand the time for the safety breaks and for all other non-driving manipulations of a vehicle to two hours per shift; this reduces the total time of driving to six hours.
 
 For simplicity, we will assume in our model that the working shift lasts a fixed amount of time, e.g., six hours during which there are no breaks for the drivers and also the time to load the maintenance material is negligible. We translate the time requirement into the {\em upper bound of the length} of the vehicle route.
 
 Another imposed rule (without clear rationale) is that each road has to be maintained by the same car in both directions. In this work we assume that {\em there are no one-way roads}.
The results presented here are valid also for the more general case when one-way roads are present; the assumption does not change the algorithmic and complexity considerations for the problems discussed while simplifies the definitions and the arguments. Also, this assumption has been valid for the winter road maintenance of the Plzeň region road network we optimised. This is the reason why we base our models on \emph{undirected} graphs in the next sections. For each undirected graph $G$ we consider its \emph{symmetric orientation} $G^s$ where each edge is replaced by two arcs with opposite orientations. This representation enables to discuss maintaining of edges in both directions.

Finally, there is the {\em capacity} $c_m$ describing the maximum length of a route which can be maintained with only one loading of the material $m\in M$.  The capacity condition requires that during each spreading material $m$ on road-length $c_m$ the vehicle must pass its depot $d$ at least once. It is convenient to define $c_m$ as a fraction of the maximum route-length of one vehicle.

 \medskip\noindent

The goal is to assign for each edge a vehicle which will maintain the corresponding road while minimizing the number of used vehicles and the length of the roads traversed without maintenance (\emph{deadhead}\index{deadhead}). A critically important part of the considerations are {\em public complaints}.

\medskip

    {\bf Features of a winter road maintenance plan}

    \begin{itemize}
        \item 
        We construct a partition $P= \{P_1, \ldots, P_r\}$ of the set of arcs of $G^s$ into sets $P_1, \ldots, P_r$ and for each $i$ we assign vertex (depot) $d_i\in D$. We assume the type of maintenance $m$ constant in $P_i$. We also assume that the oppositely oriented edges belong to the same $P_i$.
        \item
        We construct, for each $i$, set $R_i$ so that $P_i\subset R_i$ and each arc of $P_i$ may be reached from $d_i$ by a directed closed walk of $R_i$. 
        \item 
        For each $i$, we design a route servicing the edges of $P_i$ by a single vehicle starting and terminating at $d_i$ and using only arcs of $R_i$. The schedule must meet 
        
        (1) the requirement of the maximum length of the route,
        
        (2) the requirements given by priorities $p(e), e\in P_i$,
        
        (3) the requirements given by capacities $c_m$. 
        
        \item In our actual computation for the Plzen region descibed in \cite{FLP}, the steps above are performed simultaneously.

    \end{itemize}

\subsection{State of the art}
Winter road maintenance is recognised as a notoriously hard problem (not only) from the algorithmic point of view. As far as we know, most of the literature in the algorithmic winter road maintenance concentrates in designing algorithms, which are typically based on Integer Linear Programming (LP), Constrain Programming (CP) and  local heuristics. The complexity of such algorithms is at least exponential. 

An overview of literature on the problem of winter road maintenance and its solutions is \cite{PLC1,PLC2,PLC3,PLC4}. An excellent recent overview illustrating main works on the General Routing Problem can be found in \cite{BLMV} where the authors design a new branch-and-cut algorithm for the capacitated general routing problem. In \cite{PLA}, the authors also consider road priorities and a precedence relation between roads of different priority. In \cite{KHS}, the authors aim at constructing the routes schedule minimising the maximum length of a route; the network may have one-way streets and is modelled as a mixed graph.

Kinable et.al. \cite{KHS} study a real-world snow plow routing problem (in the USA) and they compare three methods based on Integer Linear Programming (LP), Constraint Programming (CP) and a local heuristic. Ciancio et.al. \cite{CDV} applied Branch-price-and-cut method for the Mixed Capacitated General Routing Problem with Time Windows. Other heuristic algorithms can be found e.g. in \cite{BYL,GMH,GB}.

In \cite{FLP} we introduced an heuristic approach with a very competitive implementation and described the computational results for the plan of the winter road maintenance in the Plzen region. Our plan has been implemented by the regional authorities. 

  \subsection{Main contribution}
  (1) Based on our experience with practical winter road maintenance we introduce several new concepts, including public complaints. We concentrate on studying these concepts first on trees, then on bounded tree width graphs and planar graphs.
  \newline
  (2) We relate these concepts to extensive research in discrete mathematics.
  \newline
  (3) We design algorithms based on dynamic programming and prove matching hardness results in most cases.
  
Summarising, we introduce a realistic robust model of winter road maintenance which can be successfully studied by theoretical methods and admits competitive algorithms without adding unrealistic conditions for the actual road networks to be maintained. 

\section{Basic concepts}
\label{s.bc}

In practice we have given a road network which we represent by a graph. 
Vertices represent crossroads (and dead ends) and edges represent roads among them. Let $z\geq 1$ denote the number of priority classes of roads and let $M$ denote the set of types of maintenance, e.g., $M= \{chemical,inert,snow-plow\}$ in the Czech Republic. We associate several functions with $G$:
    	\begin{itemize}
    		\item $\alpha: E\rightarrow R^+$ gives to every edge a non-negative length,
    		\item $p: E\rightarrow \{1,\ldots, z\}$ priority level,
    		\item $m: E\rightarrow M$ type of maintenance.
    	\end{itemize}
    	Let $D\subset V$ be a set of depots. For $d\in D$ we denote by $m(d)\in M$ the stored material at depot $d$. 
The arc routing problem is to search for a cover by subgraphs which correspond to parts of the network maintained by single vehicles.
   These subgraphs maintained by one vehicle are called maintaining plans.   

    \begin{definition}[Maintaining plan]
    \label{def.pre}
 \emph{Maintaining plan} is a tuple $(G, P, d, \alpha, z, p)$ where 

(1) $G= (V,E)$ is a graph of a road network, 

(2) $P\subset E$ is the set of maintained edges,

(3) $d\in V$ is the {\em depot},

(4) $\alpha: E\rightarrow Z^+$ gives to every edge a non-negative integer length,
    		
(5) $p: E\rightarrow \{1,\ldots, z\}$ gives to each edge its priority level. 
\end{definition}

Having a maintaining plan, we can search for a route which services it. There are several {\em external parameters} influencing properties of a servicing route.

\begin{definition} [External parameters]
\label{def.exp}
\begin{enumerate}
    \item 
    maximum length of a servicing route denoted by $L= L(G)$,
    \item  
    function $f\!: E \rightarrow Z^+$ giving an upper bound on the frequency of traversing each edge $e$,
    \item 
    function $t\!: \{1,\dots,z\}\times Z^+ \rightarrow Q^+$ describing limits associated with priorities: the total length of each servicing route between $i$-th and $(i+1)$-th traversal of edge $e$ is at most 
    $t(p(e),i)L$.
    \item
    capacity $c\leq 1$ such that the servicing route must visit the depot within each sub-route of total length bigger than $cL$.
\end{enumerate}
\end{definition}
These parameters are self-explanatory with the exception of function $t$, namely why it depends on specific traversals of a given edge: this is a natural feature of the winter road maintenance since it is most important that the roads are clean when people leave their homes in the morning, and when they come back home in the afternoon.

\begin{definition}[Vehicle route]
For given maintaining plan $(G, P,d,\alpha,z,p)$ we define a $L,c,t,f$-vehicle route as a closed walk $w= (e_1, \dots, e_\ell)$ where each $e_i$ is an element of the symmetric orientation $G^s = (V, E^s)$ of $G$, each edge $e\in E^s$ appears at least once and at most $f(e)$ times in $w$ and (1) the requirement of the maximum length of the route, (2) the requirements given by priorities $p(e), e\in P$ and (3) the requirements given by the capacity $c$ are kept. 
\end{definition}

\begin{definition} [Maintaining plan routing problem]
 \label{def.mpr}
{\em Maintaining plan routing problem} is to decide, given a maintaining plan with $P=E$, if a vehicle route exists. We say that the problem is {\em unweighted} if the length of each edge is equal to one.
\end{definition}

We note that the condition $P=E$ is a natural simplification when considering single routes since the dead-heading is typically negligible.

\medskip

We first observe that the problem to decide if
a vehicle route exists is NP-complete even if $G$ is a star rooted at its vertex $d$ of degree one, $c= 1/2$ and $t$ is uniformly equal to 1. 
Such input tree admits a vehicle route if and only if the edges not incident with $d$ can be divided into two parts with equal sums of edge lengths. This problem is called {\em Partitioning} and is a basic NP-complete problem.

\medskip

In view of this observation it is natural to consider a restriction that {\em the edge-lengths are integers bounded by a fixed power of the size of the input graph}. A possible justification is that in practice the resources (time, maintaining material, cost) depend linearly on the length of the road segments. However, it is important to have in mind that there are situations, e.g., steep hills with heavy snow-fall, when this is not true and on the contrary it is realistic to assume that the resources depend non-linearly, even exponentially, on the lengths of road segments.

\medskip

{\bf Trees.}
In this paper we concentrate mostly on the 
Maintaining plan routing problem on trees.
Trees are graphs useful for a representation of a vehicle route in the winter road maintenance. Even if the set of edges maintained by a single vehicle is not a tree we can represent it as a tree obtained e.g. from the Depth first search (DFS) algorithm. The symmetric orientation of a tree is always an Eulerian graph and thus there exists a natural vehicle route if the only goal is to visit all arcs exactly once.
We show in next sections that the 
Maintaining plan routing problem is interesting and not easy even for trees. 

\subsubsection{Graph cutting.}
\label{sub.gcut}

The routing when the priority function is constant is closely related to the 'classic' graph theory concept of {\em graph cutting}.

\begin{definition}[Graph Cutting Problem]
Graph Cutting Problem is to find, for a given graph $G$ rooted in its vertex $r$ and set of numbers $t_1, \dots, t_k$, a cover of $E(G)$ by connected subgraphs $G_1, \dots, G_k$ rooted in $r$ of sizes $t_1, \dots, t_k$ ($\leq t_1, \dots, \leq t_k$ respectively).
\end{definition}

We will need and prove a negative result on the graph cutting when we consider the planar graphs. However, there is a very nice positive result:

\begin{theorem}[\cite{JRP,G,L}]
\label{thm.gl}
Given a k-edge-connected graph $G = (V,E)$, k edges $e_1,e_2, \ldots, e_k$ of $G$ and k positive integers $m_1, \ldots, m_k$ with the sum equal to $|E|$. There exists a partition $E = E_1\cup \cdots \cup E_k$ such that $e_i \in E_i$, $|E_i|= m_i$, and $G_i = (V(E_i),E_i)$ is connected for each $i\leq k$.
\end{theorem}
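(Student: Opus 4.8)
The plan is to prove the statement by induction on $k$, peeling off one part at a time while keeping control of the edge-connectivity of what remains. The base case $k=1$ is immediate: $G$ is connected, $m_1=|E|$, and $E_1=E$ works. For the inductive step the goal is to extract a single connected part $E_k$ of the prescribed size $m_k$ that contains $e_k$ and avoids the remaining marked edges $e_1,\dots,e_{k-1}$, and to do so in such a way that the graph obtained after deleting $E_k$ is $(k-1)$-edge-connected; the induction hypothesis applied to that graph then produces $E_1,\dots,E_{k-1}$.

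Concretely, I would isolate the following \emph{peeling lemma}: if $G$ is $k$-edge-connected with $k\ge 2$, $e_k$ is a prescribed edge, $e_1,\dots,e_{k-1}$ are edges to be avoided, and $1\le m_k\le |E|-(k-1)$, then there is a connected subgraph $H\subseteq G$ with $e_k\in E(H)$, $E(H)\cap\{e_1,\dots,e_{k-1}\}=\emptyset$, $|E(H)|=m_k$, and such that $G-E(H)$ is $(k-1)$-edge-connected. This lemma closes the induction: since the $m_i$ are positive and sum to $|E|$ we have $m_k=|E|-\sum_{i<k}m_i\le |E|-(k-1)$, so the lemma applies; setting $E_k=E(H)$, the graph $G-E_k$ is $(k-1)$-edge-connected, still contains $e_1,\dots,e_{k-1}$, and has $\sum_{i<k}m_i$ edges, so the inductive hypothesis yields the remaining parts.

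To prove the peeling lemma I would grow $H$ one edge at a time, starting from the single edge $e_k$, maintaining the invariant that $H$ is connected, contains $e_k$, contains none of $e_1,\dots,e_{k-1}$, and that $G':=G-E(H)$ is $(k-1)$-edge-connected. As long as $|E(H)|<m_k$ we have $|E\setminus E(H)|\ge k$, so $G'$ is $(k-1)$-edge-connected with more than $k-1$ edges. The growth step succeeds if we can find an edge $f$ incident to $V(H)$ (so that adding it keeps $H$ connected, whether $f$ is a boundary edge or a chord), with $f$ not marked, and with $G'-f$ still $(k-1)$-edge-connected. The last condition is equivalent to $f$ not lying in any edge cut of $G'$ of size exactly $k-1$, because deleting $f$ lowers the size of a cut precisely when $f$ crosses it.

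The crux, and the step I expect to be the main obstacle, is showing that such a safe edge always exists. The useful structural observation is that, under the invariant, a cut $\partial_{G'}(X)$ of size $k-1$ arises exactly from a $k$-cut $\partial_G(X)$ of the original graph containing precisely one edge of $H$; in particular the boundary $\partial_{G'}(V(H))=\partial_G(V(H))$ has at least $k$ edges and is therefore never itself a minimum cut. Thus there are at least $k$ candidate edges incident to $V(H)$ (boundary edges, or chords when $V(H)=V$, which still exist because $G'$ retains surplus edges), of which at most $k-1$ are marked. What remains is to argue that not all of the unmarked candidates can simultaneously lie in minimum $(k-1)$-cuts of $G'$; here I would exploit the submodularity of the cut function and a standard uncrossing argument on the crossing family of minimum cuts to select, among the $\ge k$ edges incident to $V(H)$, one that is both unmarked and avoids every $(k-1)$-cut. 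Making this selection fully rigorous, balancing the avoidance of the $k-1$ marked edges against the avoidance of the tight cuts, is the delicate heart of the argument.
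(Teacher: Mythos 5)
Your induction cannot work as designed, because the peeling lemma at its core is false for $k\ge 3$. Take $G=K_4$, which is $3$-edge-connected with $|E|=6$, and take $m_1=m_2=m_3=2$, so $m_3=2\le |E|-(k-1)=4$ and the lemma's hypotheses hold. Every connected subgraph $H$ with two edges containing the prescribed edge $e_3$ is a path of length two; deleting its two edges drops the degree of the middle vertex from $3$ to $1$, so $G-E(H)$ contains a bridge and is never $2$-edge-connected. Hence no peel satisfying your invariant exists, even though the theorem itself holds on this instance (e.g.\ $K_4$ splits into three paths of length two, one through each prescribed edge). The obstacle is therefore not the selection step you flagged as delicate (balancing marked edges against tight cuts via uncrossing) but the invariant itself: requiring the remainder to be $(k-1)$-edge-connected is strictly stronger than what the induction needs, and this strengthening is unattainable in general. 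This is precisely why Győri--Lovász-type partition theorems are considered hard: no known proof proceeds by greedily maintaining a connectivity certificate under deletion. A secondary, fixable issue: the remainder must be read on the vertex support of its edge set, as in the statement $G_i=(V(E_i),E_i)$; under your spanning-subgraph reading even legitimate peels would spuriously fail because of isolated or stripped vertices (already for $k=2$ on a cycle, where the remainder is a path and every edge is a bridge of the spanning remainder).

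For comparison: the paper does not prove this statement at all --- it quotes it from \cite{JRP,G,L}. The standard route to the edge version is the vertex-partition theorem of Győri \cite{G} and Lovász \cite{L} (every $k$-connected graph with prescribed vertices $v_1,\ldots,v_k$ and sizes summing to $|V|$ admits a partition into connected pieces of those sizes containing the prescribed vertices), applied to the line graph: if $G$ is $k$-edge-connected then $L(G)$ is $k$-connected, connected vertex sets of $L(G)$ correspond exactly to edge sets of $G$ inducing connected subgraphs, and the prescribed edges $e_i$ become prescribed vertices of $L(G)$. The vertex theorem itself is proved by Győri via an intricate local-exchange (``cascade'') argument and by Lovász via a topological argument; neither maintains residual connectivity in your sense. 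To salvage your write-up, either invoke this line-graph reduction, or replace the $(k-1)$-edge-connectivity invariant by a genuinely weaker inductive statement --- but be aware that finding such a statement is essentially the whole difficulty of the theorem.
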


\medskip

The particular aspect of winter road maintenance introduced next are {\em public complaints}. This is an important issue of anybody in this business all around the world.
 The number of complaints from public against the winter road maintenance  is a quantitative measure of the quality of the service which is focused on, e.g., in media or in election campaigns. 

\subsection{Public complaints}

The experience is that residents make complaints to insufficient service if they think that they are treated in an unfair manner in particular if their neighbourhood is 'skipped' in the service.

We make a {\em rational assumption} that the number of public complaints can be deduced from the structure of a vehicle route and in particular from its {\em perceived unfairness}. We call this (number) the \emph{unfairness index} of a vehicle route $w$, and define it in the Definition \ref{def.neig} below. 

The additional structure used in this definition is {\em a collection of fixed cyclic orders of the neighbours of each vertex}. We observed the empiric existence and the importance of such orders in our practical work for the winter road maintenance. In fact, in situations when a consensual order between two edges sharing a vertex (representing two road-segments sharing a crossing) does not exist, the administrators in charge of the winter road maintenance insisted on using different maintaining cars to service these two roads in order to avoid complaints caused by one vehicle giving 'unjustified preference' to one of the two neighbourhoods.

\begin{definition}
\label{def.neig}
Let $G$ be a graph and let $d$ be its vertex (the depot) of degree 1. We further assume that we are given a fixed cyclic order $O(v)$ of the neighbours of each vertex $v$. Let $w= (e_1, \ldots, e_l)$ be a vehicle route. For $i< l$ let $e_i= (s_i, t_i)$, let $(w,i)^+$ denote the edge of $G$ incident with $t_i$ which follows $\{s_i, t_i\}$ in $O(t_i)$ and let $(w,i)^-$ denote the edge of $G$ incident with $s_i$ which precedes $\{s_i, t_i\}$ in $O(s_i)$.
\begin{itemize}
    \item If no orientation of $(w,i)^+$ belongs to $(e_1, \ldots, e_{i+1})$ then we say that edge $(w,i)^+$ has a {\em forward complaint}.
    \item If no orientation of $(w,i+1)^-$ belongs to $(e_1, \ldots, e_{i+1})$ then we say that edge $(w,i+1)^-$ has a {\em backward complaint}.
    \item
    The \emph{unfairness index} of the route $w$, denoted by $Uf(w)$, is the sum of the number of edges which have a forward complaint and the number of edges which have a backward complaint.
    \end{itemize}
    \end{definition}

Naturally we can introduce the {\em unfairness minimisation problem} to find a vehicle route $w$ with $Uf(w)$ as small as possible.
We show further that this innocent looking problem is related to the extensively studied \emph{necklace splitting problem}.

\section{Main results}
\label{s.mainr}

\subsection{Routing problem on trees with constant priority function}

 We recall that the parameters of winter maintaining of a network are
(1) upper bound $L$ for the total length of a vehicle route, (2) priority function $t$ and (3) capacity $c$. Additional assumption of this section is a constant priority function. We denote the value $t:= t(p(e),i)$. 

\medskip\noindent


We already noticed that the weighted problem is NP-complete even for stars. Hence, let us consider the unweighted problem. It is equally straightforward to observe that deciding admissibility for the unweighted problem is NP-complete for subdivided stars when the capacity $c$ may depend on the input tree (reduction to the 3-partitioning problem).

\medskip

The case of the fixed capacity $c$ already admits a polynomial algorithm based on the dynamic programming.

\begin{theorem}\label{thm:sufficient_cond}
There is a polynomial algorithm for finding a solution of the unweighted maintaining plan routing problem restricted to maintaining plans $(T,d,p)$ with $T$ tree, the priority function $t$ constant and the capacity $c=1/c'$ a constant fraction not depending on the input tree. Also, the unfairness minimisation problem admits a polynomial algorithm.
\end{theorem}

\subsection{Public complaints and necklace splitting}
Noga Allon \cite{splitting_necklaces} studied in 1987 an interesting problem in combinatorics which may be interpreted as the problem how to divide a stolen necklace fairly between the thieves.

\begin{definition}[k-splitting] 
    Let $N$ be an open necklace, i.e., a path consisting of $k\cdot n$ vertices-beads, chosen from $s$ different colors. There are $k\cdot a_i$ beads of color $i$, $1\leq i \leq s$. A \emph{$k$-splitting}\index{k-splitting} of the necklace is a partition of the necklace into $k$ parts, each consisting of a finite number of non-overlapping intervals of beads whose union contains precisely $a_i$ beads of color $i$, $1\leq i \leq s.$ The \emph{size}\index{k-splitting! size} of the $k$-splitting is the number of cuts that forms the intervals of the splitting.
\end{definition}

\begin{definition}[Necklace splitting problem] 
    Let $N$ be a necklace. \emph{Necklace splitting problem}\index{necklace splitting problem} is to find for given number $k$ a k-splitting of necklace $N$ of minimal size.
\end{definition}

If the beads of each color appear contiguously, then at least $k-1$ cuts  between the beads of each color are necessary and hence the number $(k-1)\cdot s$ of cuts is a lower bound. The following theorem says that this is sufficient for all $k$-splittings.

\begin{theorem}[Noga Alon]
\label{thm:alon}
    Every necklace with $ka_i$ beads of color $i$, $1\leq i \leq s$, has a $k$-splitting of size at most $(k-1)\cdot s.$ 
\end{theorem}

This theorem has only topological non-constructive proofs so far; Alon's proof uses a transformation of the discrete problem to a continuous coloring of the unit interval.
    
\subsubsection*{Complexity of necklace splitting.}

The algorithmic complexity of the necklace splitting has been intensively studied. 
First, the problem to
\emph{ determine the algorithmic complexity of feasible splitting with the smallest number of cuts} was proven to be NP-complete even for 2-splitting ($k= 2$) and two beads of each color by Bonsman, Eppig and Hochstättler \cite{bonsma}. Alternative proof was made by Meunier \cite{meunier}.

However, more attention has been given to another problem. Since the known proofs of the existence of the splitting of size $(k-1)s$ are not constructive, the consequent research has been directed towards constructively finding the splitting. The following question had been open for a long time: 

\medskip

\emph{Can one find efficiently the splitting guaranteed by Theorem \ref{thm:alon}?} 

\medskip

This was finally answered negatively in 2019 by Filos-Ratsikas and Goldberg \cite{FG}. To explain this result we introduce the problem \emph{LEAF} (see \cite{P,FG}).

\begin{definition}[LEAF problem]
    An instance of the problem called \emph{LEAF}\index{LEAF} consists of a graph $G$ of maximum degree $2$, whose $2^n$ vertices are represented by  $0,1$ sequences of length $n$; $G$ is given by a polynomial Turing machine that takes as input a vertex and outputs its neighbours. Moreover, the vertex $0$ has degree $1$. The goal is to output another vertex of degree $1$.
\end{definition}

We say that a problem is \emph{PPA-complete}\index{PPA-complete} if it is polynomial time equivalent to the LEAF problem. 
A cryptographic hardness of the PPA-complete problems is discussed e.g. in \cite{FG}.

\medskip

The result of Filos-Ratsikas and Goldberg is that finding necklace splitting guaranteed by Theorem \ref{thm:alon} is PPA-complete even for $k=2$.

\medskip

As discussed earlier, the {\em weighted} maintaining plans is relevant for the winter road maintenance when it is realistic to assume that the resources of vehicles (time, amount of the spreading material) depend non-linearly on the lengths of road segments.

\begin{theorem}\label{thm:splitting} 
    There exists a polynomial reduction of the necklace splitting problem to the  unfairness minimisation for maintaining plans routing on trees, even when the maintaining plan is a star with weights on edges.
\end{theorem}

Finally, we find the next questions appealing:

\begin{question}
\label{q.aaa}
Is there a good approximation algorithm for the weighted unfairness minimisation arc routing for trees?
\end{question}

\begin{question}
\label{q.aaaa}
Is there an analogue of Theorem \ref{thm:alon} for the unfairness minimisation for general trees and for planar graphs?
\end{question}

\subsection{Routing unweighted trees with bounded degrees}
\label{s.adm}
In this section we consider general priorities in the maintaining plans. All trees are unweighted. 
We construct a polynomial algorithm based on dynamic programming which can decide if a given maintaining plan $(T, d, z, p)$, $T$ tree {\em of bounded degree} admits a vehicle root where in addition {\em each arc is traversed at most a constant number of times}. We also show that both these additional assumptions are necessary.

\begin{theorem} \label{thm:fix_degree_traverse}
Fixed integers $F, \Delta$. There exists a polynomial time algorithm which for a tree $T = (V,E)$ rooted in $d$ with maximal degree at most $\Delta$, function $f: E \to N$ such that $f(e) \le F$ for all $e \in E$
and function $g: E\times \{1, \ldots, F\} \to N$ decides whether there exists a closed walk $w$ starting at $r$ satisfying
\begin{itemize}
\item Every edge $e$ of $T^s$ is traversed $f(e)$-times (at most $f(e)$-times respectively)  in both directions.
\item For every edge $e$ of $T^s$ and $y\leq f(e)$, there are at most $g(e,y)$ steps between $y-$th and $(y+1)-$st traverses of $e$, taken cyclically.
\end{itemize}
As a consequence, there is a polynomial algorithm  to decide if a $L,c,t,f-$vehicle route on $T$ exists.
\end{theorem}

Question: Is it necessary to fix $F$ and $\Delta$? We first show that the admissibility is hard for unbounded $f$ even if $G$ is a binary tree and $g$ depends only on the edge.

\begin{theorem}
\label{thm.np1}
It is NP-complete to decide whether a given binary tree $T = (V,E)$ rooted in $d$ and functions $f,g : E \to N$ there exists a closed walk $w$ starting at $r$ satisfying
(1) Every edge $e$ is traversed $f(e)$-times in both directions (2) For every edge $e$, there are at most $g(e)$ steps between two consecutive traverses of $e$ in both direction, taken cyclically.
The problem is NP-complete even if we restrict $f$ to be non-increasing on all paths from the depot.
\end{theorem}

Next theorem treats the case unbounded degrees.

\begin{theorem}
\label{thm.np2}
Fixed integer $F$. It is NP-complete to decide whether a given tree $T = (V,E)$ rooted in $d$, function $f: E \to N$ such that $f(e) \le F$ for every edge $e$ and function
$g: E\times\{1, \ldots, F\} \to N$
there exists a closed walk $w$ starting at $d$ satisfying
(1) Every edge $e$ is traversed $f(e)$-times in both directions.
(2) For every edge $e$ and $y\leq f(e)$, there are at most $g(e)$ steps between the $y-$th and $(y+1)-$st traverses of $e$ in both direction, taken cyclically.
\end{theorem}

\subsection{Routing unweighted graphs of bounded tree-width}
\label{s.adm1}
In this section, all graphs will be unweighted. 
A {\em tree decomposition} of a graph $G$ is a pair $(W, b)$ where $W$ is a tree and $b: V(W) \rightarrow 2^{V(G)}$ assigns a {\em bag} $b(v)$ to each vertex
$v$ of $W$ such that
\begin{itemize}
    \item every vertex is in some bag,
    \item every edge is a subset of some bag,
    \item every vertex of $G$ appears in a connected subtree of the decomposition.
\end{itemize}

The {\em width} of the tree decomposition is defined as the size of the largest bag, minus one. The {\em tree-width} of graph $G$ is the minimum width of a tree decomposition of $G$.

Let $G = (V,E)$ have a distinguished vertex, denoted by $d$. It is useful to simplify the decomposition. A tree decomposition $(W,b)$ is {\bf canonical} if
\begin{itemize}
    \item $T$ is rooted, and the root $r$ satisfies $d\in b(r)$. 
    \item Each leaf $u$ satisfies $|b(u)|= 1$.
    \item Each non-leaf vertex $u$ satisfies one of the following conditions:
     
     $u$ has exactly one son $u'$ and $b(u) = b(u') \cup \{v\}$ for some vertex $v \in V$.
     
     $u$ has exactly one son $u'$ and $b(u) = b(u') \setminus \{v\}$ for some vertex $v \in V$.
     
     $u$ has exactly two sons $u', u''$ and $b(u) = b(u')= b(u'')$.
     
\end{itemize}

It is straightforward to verify that every graph $G$ of tree-width at most $k$ has a canonical tree decomposition of width at most $k$, of polynomial size. The following theorem is again proved by a dynamic programming argument building on the proof of Theorem \ref{thm:fix_degree_traverse}.

\begin{theorem}
\label{thm.c}
Let $z, \Delta, F$ be integer constants and let $(G,d,z,p)$ be a maintaining plan where $G = (V,E)$ is a graph rooted in $d$ and with maximal degree at most $\Delta$, given along with its canonical tree decomposition $(W,b)$ of width $k-1$ and functions $f: E^s \to N$ such that $f(e) \le F$ for all $e \in E^s$ and $t: E^s\times \{1, \ldots, F\} \to N$. Then there is an algorithm  to decide if a $L,c,t,f-$vehicle route on $G$ exists of complexity at most $pol(|G|,|t|)\times (4kF|E|)^{4Fk\Delta}$.
\end{theorem}

\subsection{Case of more routes} \label{sec:more_routes}
In this section all graphs are unweighted.
The graph of road network has each edge maintained by one method. We recall that the set of the possible maintaining methods is denoted by $M$. We will assume that $M$ has a fixed size, e.g., $M= \{c,i,s\}$. It is natural to assume that each maintaining vehicle can snowplow and thus we can include deadheading in our model.  

\begin{definition}
\label{def.mm}
    Let $G$ be a graph of road network and $G^s$ its symmetric orientation.  Let $p: E\rightarrow \{1,\ldots, z\}$ be its priority function and $m: E\rightarrow M$ be its
    maintaining type function. Let $D$ be the set of the depots. We say that a tuple $(H, P, d, \alpha, z, p)$ where $H$ is a subgraph of $G$ and $d\in D \cap V(H)$ is a {\em maintaining plan of $G$ } if $m$ is constant on $H$ and $(H,P,d,\alpha,z,p)$ admits a  $L,c,t,f-$vehicle route.
    \end{definition}

\begin{definition}[Feasible and Optimal Solution]
    Feasible solution of a road network $G$ is a set $O$ of admissible plans of $G$ so that 
    the union of their $P-$sets covers $E(G)$. A feasible solution $O$ is {\em optimal} if 
    $|O|$ is as small as possible.
    \end{definition}

\subsubsection{Finding an optimal solution.}

First we note that finding an optimal solution is an NP-complete problem even for $G$ a tree, all edge-weights equal to 1 and $|D|= 1$.

Hence from now on let $o$ be a fixed integer and we consider the optimization problem 
{\bf Roadnet(o):} find out if there is a feasible solution of a road network consisting of at most $o$ admissible plans.
We arrive at a result analogous to Theorem \ref{thm.c} by further refining the dynamic optimization argument of its proof.

\begin{theorem}
\label{thm.c1}
Let $z, \Delta, F$ be integer constants and let $(G,D,z,p,m)$ be a road network where $G = (V,E)$ is a graph with maximal degree at most $\Delta$ and $D\subset V$, given along with its canonical tree decomposition $(W,b)$ of width $k-1$ and functions $f: E^s \to N$ such that $f(e) \le F$ for all $e \in E^s$ and $t: E^s\times \{1, \ldots, F\} \to N$. Then there is an algorithm for  {\bf Roadnet(o)} of complexity at most $pol(|G|,|t|)\times (4koF|E|)^{4Fko\Delta}$.
\end{theorem}

\subsection{Solving routing in planar networks}

In this section, we 
consider the class of the planar graphs which realistically model most of road networks.
We start with a hardness result on planar graph cutting.

\begin{theorem}
\label{thm.npcut}
The following planar graph cutting problem is NP-complete: given a planar graph $G$, its vertex $d$ and numbers $t_1, t_2$, decide if there are two connected subgraphs $G_1, G_2$ containing $d$ so that $|E(G_i)|= t_i, i= 1,2$ and 
$E= E(G_1)\cup E(G_2)$.
\end{theorem}
\begin{proof}
We show a reduction from the Steiner tree problem for planar graphs which is a well known NP-complete problem.

{\em Steiner tree problem}: given a graph and a set $T$ of its vertices called terminals, find a connected subgraph that includes all the terminals and has the minimum possible number of edges.

The reduction goes as follows: let $G$ be a planar graph and let $T$ be a set of its vertices. We take one of the vertices of $T$ and call it $d$. Next, we attach to each vertex of $T\setminus d$ a path of $|E|$ edges. We let 
$t_1= (|T|-1)|E|+x$ and $t_2= |E|$. Obviously, $G$ has a Steiner tree of size at most $x$ if and only if a feasible graph cutting exists.

\end{proof}

Taking into account the proof of Theorem \ref{thm.npcut}, we get the following hardness result:

\begin{theorem}
\label{thm.hhh}
The maintaining plan routing problem for
the planar graphs is NP-complete even when $c= 1/2$ and all edge-weights are equal to one.
\end{theorem}


By Theorem \ref{thm.c1} (case of $o=1$),  the maintaining plan routing problem for the planar graphs with bounded degrees can be solved in $2^{O(\sqrt n log  n)}n^{O(1)}$ since every planar graph of $n$ vertices has tree width at most $\sqrt n$. 

We conjecture that assuming the exponential time hypothesis, there is no algorithm of complexity $2^{o(\sqrt n)}n^{O(1)}$.

Most of the realistic medium size road networks are planar bounded degree, with at most  ten thousand edges (road-segments) and around one hundred  of the maintaining vehicles. 
This leads to studying planar road networks with $n$ vertices and $O(\sqrt n)$ maintenance cars. 

We do not know if this problem admits a sub-exponential algorithm.
However, many realistic road networks contain small cuts and their tree-width is small. For such networks, Theorem \ref{thm.c1} implies a sub-exponential algorithm.

\section{The proofs}

\subsection{Proof of Theorem \ref{thm:sufficient_cond}}
\label{subt1}

We start by showing a polynomial algorithm for the Tree Cutting Problem based on the dynamic programming. 

\begin{theorem}
\label{thm.cuttk}
Let $k$ be any fixed number. There is a polynomial algorithm to solve the Tree Cutting Problem.
\end{theorem}

\begin{proof}
We are given a tree $T$ rooted in a vertex $r$ and integers $s_1,\dots,s_k$. Let us denote by $n$ the number of vertices of $T$. 
We need to cover the graph $T$ with trees of given sizes. To do that we will construct a cover for all possible sizes of trees. This set of covers is denoted by $F(v)$.
Formally, we proceed in two steps. 

First, define for each $v\in V(T)$ a set $F'(v)$; the elements of $F'(v)$ are all $k$-tuples of trees $(T_1, \dots, T_k)$ with root $v$ such that:
$$B(v) = \bigcup_{i = 1}^k T_i.$$
Each $k$-tuple in $F'(v)$ is a cover of the branch $B(v)$ by trees $T_1, \dots, T_k$. 
The {\em size} of $(T_1, \dots, T_k)$ as the vector $(t_1,\dots, t_k)$ where $t_j = |E(T_j)|$, $1\leq j \leq k$. 
We define an equivalence on $F'(v)$: 
    $(T_1,\dots, T_k)$ and $(T'_1,\dots, T'_k)$ are equivalent if $\forall i \in \{1, \dots, k\}\!: t_i = t'_i.$ 

Secondly, for each vertex $v$ we let $F(v)$ to be the set of all representatives of this equivalence. 
We denote by $S(v)$ the set of the sizes of the elements in $F(v).$
We note that $|S(v)| \leq n^k$ because each tree has at most $n$ edges and there are $k$ trees in every $k$-tuple.

We will construct $F(v)$ for all $v \in T$ recursively.

\begin{itemize}
    \item Let $v$ be a leaf: $F(v) := \{(\emptyset, \dots,\emptyset)\}.$
    \item Let $v$ be a parent of vertices $v_1, \dots, v_m$ and we assume for all $i = 1,\dots, m$ $F(v_i)$ are determined.
    We will define for $v$ and $1\leq i\leq m$ the set $F_{v_i}(v)$ of $k$-tuples representing covers of the subtrees induced by vertex $v$ and branches rooted in its children $v_1, \dots, v_i$. We construct $F_{v_i}(v)$ by the {\em additive step} described below.
    \item
    We let $F(v) := F_{v_m}(v).$
\end{itemize} 
\medskip

Now we describe the additive step used for the construction of $F(v)$. 

\medskip
\noindent {\bf Additive step:} Given a k-tuple $(T_1,\dots, T_k)$, we construct a k-tuple $(T'_1, \dots, T'_k)$ by addition of the edge $\{v_i, v\}$ in every possible way. There are two cases dependent on the size of the cover.\\[-7mm]
            \begin{enumerate}
            \item Case $t_i = 0$ for each $1 \leq i \leq k:$\\
                 we add the edge for every nonempty subset of indices\\ 
                 $$\forall I \subset \{1, \dots, k\}, |I| \geq 1$$
                            $$j \in I: T'_j = T_j \cup \{v, v_i\}$$
                            $$j \notin I: T'_j = \emptyset$$
            \item Case $\exists i \in \{1,\dots k\}: t_i \neq 0$\\
                Let $J  = \{i\;|\, t_i = 0\},$ 
                we add the edge for all nonepmty trees and
                we add the edge for every subset of empty trees 
                            $$\forall I \subset J$$
                            $$j \notin J: T'_j = T_j \cup \{v, v_i\}$$
                            $$j \in I: T'_j = T_j \cup \{v, v_i\}$$
                            $$j \in J\setminus I: T'_j = \emptyset$$
                            
            \end{enumerate}
This finishes the description of the Additive step.

\begin{itemize}
        \item Construction of ${F_{v_1}(v)\!:}$\par
            We start with $F_{v_1}(v) = \emptyset.$
            For each $(T_1, \dots, T_k) \in F(v_1)$ we construct by the additive step the set of $k$-tuples $(T'_1, \dots, T'_k)$ which we add into $F_{v_1}(v).$
        
        \item Construction of ${F_{v_{i+1}}(v)\!:}$\par
            In this case we proceed in two steps. 
            
            First we construct $F'(v_{i+1})$ by  additive steps applied to $F({v_{i+1}})$.
            Specifically, we start with $F'(v_{i+1}) = \emptyset.$
            For each $(T_1, \dots, T_k) \in F(v_{i+1})$ we construct by the additive step a set of $k$-tuples $(T'_1, \dots, T'_k)$ which we add into $F'(v_{i+1})$.
            
            Secondly, we merge $F_{v_i}(v)$ and $F'(v_{i+1})$ again in two steps as follows.
            
            First, for all $(T_{11}, \dots, T_{1k}) \in F_{v_i}(v)$ and all $(T_{21}, \dots, T_{2k}) \in F'(v_{i+1})$
            
            $$\forall i \in \{1, \dots, k\}\!: T'_i =   T_{1i} \cup T_{2i}$$
            and we add into $F_{v_{i+1}}(v)$ the created $k$-tuple $(T'_1, \dots, T'_k).$
            
            Finally, we clean the set $F_{v_{i+1}}(v)$ by keeping only the representatives of equivalence classes.
\end{itemize}

The described construction of the set $F(r)$ determines the set $S(r)$ of sizes. We have a solution of the tree cutting problem if and only if the $k$-tuple $(s_1,\dots, s_k)$ is in $S(r)$.  

By analyzing the above procedure we need at most two times $n^k \times n^k$ steps for adding of one edge. Hence the complexity of the algorithm is asymptotically $n^{2k+1}$ because there are at most $n$ edges.
 So there exists a polynomial algorithm for the tree cutting problem with fixed $k$.
\end{proof}

{\bf Proof of Theorem \ref{thm:sufficient_cond}.}
We distinguish several cases.


\begin{enumerate}
    \item \label{thm:const_prio_case1}$t \geq 1$, $c = 1\!:$
        \begin{itemize}
            \item priority condition and capacity condition always hold
            \item necessary and sufficient condition for existence of $L,c,t$-vehicle route  is: $|E(T)| \leq \frac{1}{2} L$
        \end{itemize}
    \item\label{thm:const_prio_case2} $t \geq 1$, $c < 1\!:$
        \begin{itemize}
            \item priority condition always holds
            \item clearly, if an arc belongs to a trip then its reverse belongs to the same trip. Trips are determined by subtrees rooted in $d$ and there is no advantage in going through an arc more than once in the same trip
            \item we want to construct subtrees $T_1, \dots, T_k$ rooted in $d$ such that $k := \left\lceil \frac{1}{c}\right\rceil$, 
            $$\forall i \in [k]: |T_i| \leq cL,$$ 
            $$T= T_1\cup \ldots T_k$$ and 
            $$\sum_{i\leq k}2|E(T_i)|\leq L.$$ 
            This is achieved by the algorithm for tree cutting problem described in the proof of Theorem \ref{thm.cuttk}. A solution is any collection of $k$ trees with sizes $t_i$ satisfying $t_i \leq cL$ for each $i\leq k$, and 
            $$\sum_{i\leq k}2t_i\leq L.$$ 
        \end{itemize}
    \item\label{thm:const_prio_case3} $t < 1$, $t \leq c\!:$
        \begin{itemize}
            \item capacity condition holds if priority condition is satisfied
            \item necessary and sufficient condition for existence of $L,c,t$-vehicle route is: $2|E(T)| \leq tL$.
        \end{itemize}
    \item\label{thm:const_prio_case4} $t < 1$, $c < t\!:$
        \begin{itemize}
            \item this case is equivalent to case 2. for $L':= tL$, $t'=1$ and capacity $c' := \frac{c}{t}$
        \end{itemize}
\end{enumerate}
Solution of cases 1 and 3 can be a DFS order of a tree if the necessary condition holds otherwise there is no solution. Case 4 is reduced to case 2. Case 2 admits a polynomial algorithm by Theorem \ref{thm.cuttk}.

\subsection{Proof of Theorem \ref{thm:splitting}}
\label{sub.t2}

\begin{proof}
    We have given an instance of the Necklace splitting problem. The necklace $N$ of length $nk$ has to be partitioned into $k$ parts each containing $a_i$ beads of color $i$, $1\leq i \leq s$. So the number of colors is $s$.
    
    We describe a construction of a network. The graph of the network will be a star with center $x$ rooted in its leaf $d$ and with $nk$ non-root leaves $u_1, \ldots, u_{nk}$. The cyclic order for $x$ is $(d,u_1, \ldots, u_{nk})$.
    The number of traverses of each arc $e$ will be bounded by $f(e)$ where $f(e)= 1$ if $e$ is not incident with $d$, and $f(e)= k$ otherwise.
    
     For each color $r$, $1\leq r\leq s$, we define number $M_r$ recursively: $M_1= 1 + n$ and  $M_{r+1}= 1 + n \sum_{q\leq r} M_q$.

    For each $i$, $1\leq i \leq nk$, if the $i$-th bead of the necklace has color $r$, than we let $\alpha(\{x,u_i\})= M_r$. We also let $\alpha(\{x,d\})= 1$. Finally we let 
   $L= 2k(1+\sum_{r=1}^s a_r M_r))$ and $c= \frac{1}{k}$.

    The capacity constant determines the length of each trip from the depot to be exactly $2(1+\sum_{r=1}^s a_r M_r)$. Multiplication by two means each edge is traversed in both directions. 
    
    A solution of the unfairness minimisation problem is a vehicle route $w$. The complaints at edges $\{x,u_i\}$ naturally determine the splits of the necklace. If $\{x,u_i\}$ has a forward complaint then we split the necklace between the $i-1$-th and $i$-th beads of the necklace. 
    If $\{x,u_i\}$ has a backward complaint then we split the necklace between the $i$-th and $i+1$-st beads of the necklace.
    
    Summarising, the sum of the complaints at vertices of the route minimising the unfairness index is equal to the minimum size of the necklace splitting.

\end{proof}

\subsubsection{Unfairness and sum packing problem}
As a follow-up to the proof of Theorem \ref{thm:splitting} we show a connection of the unfairness minimisation problem with the Sum Packing Problem of Erd{\"o}s \cite{smu_packing}.

As described above, the Necklace Splitting is, from the complexity point of view, hard already for the number of thieves $k=2$, and each $a_i= 1$.
Clearly, when the number of thieves $k=2$ and each $a_i= 1$ then the proof of Theorem \ref{thm:splitting} works for any set of numbers $M_1, \ldots, M_s$ with the property that all partial sums of $M_i$'s are pairwise distinct. 
 A natural question is whether such set exists with all $M_i$ bounded by a fixed power of $s$. 

It turns out that the answer is negative. This is related to a very nice part of the combinatorial number theory which we now explain.

\begin{definition}(Set with distinct subset sums) 
A set $S$ of positive integers has \emph{distinct subset sums}\index{distinct subset sum} if the set $\{\sum_{x\in X} x: X \subset S\}$ has $2^{|S|}$ distinct elements.
\end{definition}

For example, any set of distinct powers of number 2 has the distinct subset sums property. More examples of sets with distinct subset sums are $\{3,5,6,7\}$ and  $\{6,9,11,12,13\}.$ We mention a lower bound for the value of the maximum in the sets with the distinct subset sums property.

\begin{definition}
Let $f(n) = \min\{\max S: |S| = n, S$ has distinct subset sums $\}$.
\end{definition}

Paul Erd{\"o}s conjectured in 1931 that for some constant $c$
$$f(n) \geq c2^n.$$

Conway and Guy \cite{CG} found a construction of sets with distinct subset sum, now called the {\em Conway-Guy sequence}, which gives an upper bound on $f$. This was later improved by Lunnan \cite{Lu}, and then by Bohman \cite{Boh} to $f(n) \leq 0.22002 \cdot 2^n$ (for $n$ sufficiently large).

The best known lower bound, up to the constant, has been proved by Erd{\"o}s and Moser \cite{E2} in 1955, 
$$f(n) \geq 2^n/(10\sqrt n).$$

\subsection{Proof of Theorem \ref{thm:fix_degree_traverse}}
\label{sub.t3}

\begin{proof}
The length of the route has to be $l = 2 \sum_{e \in E} f(e)$ and let $I = \{ 1, \ldots, l \}$ be the set of all indices on the route. For every $A \subseteq I$ and $v \in V$ let $M_v[A]$ be true if there exists route satisfying all conditions on $T[v]$ using exactly indices of $A$ on $T[v]$. Similarly we define $M'_v[A]$ for $T'[v]$. Let $z(A)$ for the set of ordered pairs of starting and ending indices of subsequencies of $A$, i.e. $z(A) = \{(a_1,b_1), \ldots, (a_q,b_q)\}$ such that $A = \{a_1, \ldots, b_1\} \cup \cdots \cup \{a_q, \ldots, b_q\}$ and $a_1 \le b_1 < b_1+1 < a_2 \le b_2 < b_2+1 < \cdots < b_{q-1}+1 < a_q \le b_q$. Let $|z(a)| = q$ be the number of subsequences.

Let $v$ be a non-root vertex and $e = vp(v)$. If $M'_v[A] = true$, then $|z(A)| \le f(e)$ since $e$ has to be traversed $f(e)$ (some traverses may be consecutive). Therefore, there are at most $f(e) \cdot l^{2f(e)}$ sets $A$ such that $M'_v[A] = true$, so we can store all such sets $A$ instead of whole table $M'_v$ to ensure polynomial space. Similarly, if $M_v[A] = true$ then $|z(A)| \le f(e)$ since $T[v]$ can be entered at most $f(e)$-times.

We determine $M_v$ using the following dynamic programming. If $v$ is a leaf, then $M_v[A] = true$ only for $A = \emptyset$. Consider that $u_1, \ldots, u_s$ are all children of $v$. Recall that $1 \le s \le \Delta$. First, we set $M_v[A] := false$ for all $A$ and then we consider all combination $A_i$ for $i = 1, \ldots, s$ such that $M'_{u_i}[A_i] = true$. Note that there are at most $F^{\Delta} \cdot l^{2F\Delta} \le F^d \cdot (2Fn)^{2F\Delta}$ such combinations, so the algorithm is polynomial. Let $A = A_1 \cup \cdots \cup A_s$. We apply the following function for every combination.
\begin{itemize}
\item If any two sets of $A_1, \ldots, A_s$ have a common member, then the function terminates, since every index has to be used for exactly once on the route.
\item If $z(A) > f(e)$, then the function terminates, since $T[v]$ can be entered at most $f(e)$-times where $e = vp(v)$.
\item In the end, we set $M_v[A] := true$.
\end{itemize}

Now, we determine $M'_v$. Let $e= \{v,p(v)\}$. First, we set $M'_v[A] := false$ for all $A$ and then we apply the following function for every $A$ with $M_v[A] = true$.
\begin{itemize}
\item 
Let $z(A) = \{(a_1,b_1), \ldots, (a_q,b_q)\}$. If $q> f(e)$ then stop.
\item 
Let $X_1= \{a'_i; 1\leq i\leq f(e)\}$, $X_2= \{b'_i; 1\leq i\leq f(e)\}$ and $X= X_1\cup X_2$ be such that (1) $X\cap A= \emptyset$ 
(2) $X_1\cap X_2= \emptyset$ and 
(3) for each $(a_i, b_i)\in z(A)$, $a_i-1\in X_1$ and $b_i+1\in X_2$. For each such $X_1, X_2$ we let $A'= A\cup X$. 
\item
We check if $X_1, X_2$ satisfy the conditions for $g(e)$: if not, we stop.


\item 
We set $M'_v[A'] := true$.
\end{itemize}

Finally, the algorithm returns $M_d[I]$.
\end{proof}

We note that the same proof works if we require that every edge $e$ is traversed {\em at most} $f(e)$-times in both directions.

\begin{theorem}
\label{thm.main}
Let $z, \Delta, F$ be integer constants and let $(T,d,z,p)$ be a maintaining plan where $T$ is a tree with maximum degree $\Delta$. Let $f: E\rightarrow N$ satisfies for each $e\in E$, $f(e)\leq F$. Then there is a polynomial algorithm  to decide if a $L,c,t,f-$vehicle route on $T$ exists.
\end{theorem}
\begin{proof}
We use Theorem \ref{thm:fix_degree_traverse} and note that we can require that every edge $e$ is traversed {\em at most} $f(e)$-times in both directions, function $t$ can be modelled by $g$ and the capacity constraint can be modelled by connecting the depot to a new vertex of degree one and setting the proper value on $g(e)$ for the new edge. 

\end{proof}

\subsection{Proof of Theorem \ref{thm.np1}}
\label{sub.t4}
\begin{proof}
The 3-partition problems ask to decide whether a given integers $a_1, \ldots, a_{3n}$ can be split into $n$ groups with the same sum. The problem is strongly NP-complete even if it is restricted to integers strictly between $S/2$ and $S/4$ where $S$ is the target sum. Note that in this case, every group has to contains exactly 3 integers.

Let $h = \lceil \log_2 3n \rceil$ and $B = 3h$ and $b_i = B a_i$ for all $i = 1, \ldots, 3n$ and $S' = B(S+1)$.

Let $T_i$ be a binary tree on $b_i$ edges rooted in $r_i$. Let $T'$ be a binary tree rooted in $r'$ with leaves $r_1, \ldots, r_{3n}$ such that all leaves are in depth $h$. Let $T$ be a binary tree such that
\begin{itemize}
    \item $d$ is the root of $T$
    \item $d$ has only one child $d'$
    \item $T'$ is attached to the node $d'$
    \item trees $T_1, \ldots, T_{3n}$ are attached to leaves of $T'$.
\end{itemize}
Note that the size of $T$ is $O(\log n \sum_i a_i)$, so it is only $O(\log n)$-times larger than the size of the instance of 3-partition problem.

Next, $f(e) = 1$ for all edges $e$ on trees $T_1, \ldots, T_{3n}$. For an edge $e$ of $T'$, $f(e)$ is the number of trees of $T_1, \ldots, T_{3n}$ in the subtree of $e$. Finally, $f(dd') = n$. The goal is to ensure there that the route can be split into $n$ parts by passing $dd'$ and each part traverses from $d'$ to some $r_i$, whole tree $T_i$, returns to $d'$ and then traverse two other trees of $T_1, \ldots, T_{3n}$. In order to ensure the proper sum, we set $g(dd') = 2S'+2$ and $g(e)$ is a sufficiently larger number for all other edges $e$. Clearly, if integers can be split into $n$ groups, there exists a route.

Consider a walk $w$. Clearly, every tree $T_1, \ldots, T_{3n}$ has to be traversed by $w$ completely once it is entered. Traverses of $dd'$ split $w$ into $n$ parts and every tree $T_1, \ldots, T_{3n}$ is completely traversed in one part. Note that $g(dd')$ ensures that one part traverses at most $S'$ edges in both directions (excluding $dd'$).

We prove that every part traverse exactly tree trees. For contradiction, assume that trees $T_i, T_j, T_k, T_l$ are traversed in one part. Then, the number of edges in the part is at least $b_i + b_j + b_k + b_l + h = B (a_i + a_j + a_k + a_l) + h \ge 4B \frac{S+1}4 + h = B(S+1) + h > S'$ which is a contradiction. Since all parts contain at most $3$ trees, every part must contains exactly $3$ trees.

Now, consider a part traversing trees $T_i, T_j, T_k$. For contradiction, assume that $a_i + a_k + a_j > S$. The number of traversed edges in the part is at least $b_i + b_j + b_k + h = B(a_i + a_j + a_k) + h \ge B(S+1) + h > S'$ which is a contradiction. Hence, the sum of integers corresponding to each group is exactly $S$.

\end{proof}

\subsection{Proof of Theorem \ref{thm.np2}}
\label{sub.t5}
\begin{proof}
Consider an instance of 3-partition consisting of $3n$ integers $a_1, \ldots, a_{3n}$ and let $S$ be the target sum. Let $B = n$ and $b_i = B a_i$ for $i = 1, \ldots, 3n$ and $S' = BS+1$ and $S'' = (n-1)S'$. Let $T$ be tree which consists of
\begin{itemize}
\item a depo $d$, and
\item $n$ vertices $u_1, \ldots, u_n$ incident only to $d$ where $f(du_i) = 2$ and $g(du_i,1) = S'$ and $g(du_i,2)=S''$, and
\item $3n$ paths $P_1, \ldots, P_{3n}$ on $b_1, \ldots, b_{3n}$ edges such that one end-vertex of each path is $d$ and these paths have to be traversed only once.
\end{itemize}
Note that the only possible length of a route is $S'+S''$ and the short and long distances between tranverses of edges $du_i$ have to be exactly $S'$ and $S''$, respectively. If $a_1, \ldots, a_{3n}$ can be partitioned into $n$ groups of equal sum $S$ then we construct a route as follows: Starts by traversing $du_1$, tranverse 3 paths of the first group, tranverse $du_1$, traverse $du_2$, etc.

Observe that there is no route such that two close traverses of an edge $du_i$ which is interleaved by a traverse of an edge $du_j$ since the sum of lengths of any subset of paths is divisible by $B$ and even all edges edges $du_1,\ldots,du_n$ cannot contribute to a multiple of $B$. Hence, if there exists a route then it looks like as the one constructed above.
\end{proof}
Note that the problem is NP-complete even if $F = 2$ and all vertices except one have degree at most 2.

\subsection{Proof of Theorem \ref{thm.c}}
\label{sub.mmm}

Theorem \ref{thm.c} immediately follows from the following

\begin{theorem} \label{thm.tww}
Fixed integers $F, \Delta, k$. There exists a polynomial time algorithm which for a graph $G = (V,E)$ rooted in $d$ and with maximal degree at most $\Delta$, given along with its canonical tree decomposition $(W,b)$ of width $k-1$ and functions $f: E^s \to N$ such that $f(e) \le F$ for all $e \in E^s$ and $g: E^s\times \{1, \ldots, F\} \to N$ decides whether there exists a closed walk $w$ starting at $d$ satisfying
\begin{itemize}
\item Every edge $e$ is traversed $f(e)$-times in both directions.
\item For every edge $e$ and $y\leq f(e)$, there are at most $g(e, y)$ steps between the $y-$th and $(y+1)-$st traverses of $e$, taken cyclically.
\end{itemize}
\end{theorem}
\begin{proof}
We assume that for each bag $b(v)$, the edges of $G^s$ incident to a vertex of $b(v)$ (there are at most $2k\Delta$ of them) are linearly ordered. The ordering may differ in different bags.

The length of the route has to be $l = 2 \sum_{e \in E} f(e)$ and let $I = \{ 1, \ldots, l \}$ be the set of all indices on the route. 
Let $I'= \{(x,i); x\in I, 0\leq i\leq 2k\Delta\}$. For every $A \subseteq I'$,  and $u\in V(W)$ let $M_u[A]$ be true if there exists route $w= (e_1, \ldots, e_l)$ satisfying all conditions on $G_{p(u),u}$ so that: 
\begin{itemize}
    \item If $A_0= \{x;$ there is $i$ such that $(x,i)\in A\}$ then $w$ uses exactly indices of $A_0$ on $G_{p(u), u}$,  
    \item 
    For each $(x,i)\in A$, $i=0$ iff $e_x$ is not incident to a vertex of $b(u)$. 
    \item
    Let $S(A)= \{x\in A_0; e_x$ is incident with a vertex of $b(u)\}$.
    For each $x\in S(A)$, if $(x,i)\in A$ then the edge $e_x$ of $w$ is the $i-$th edge of the fixed linear order of the edgers incident with a vertex of $b(w)$.  
\end{itemize}


Let $z(A) = \{(a_1,b_1), \ldots, (a_q,b_q)\}$ such that $A_0 = \{a_1, \ldots, b_1\} \cup \cdots \cup \{a_q, \ldots, b_q\}$ and $a_1 \le b_1 < b_1+1 < a_2 \le b_2 < b_2+1 < \cdots < b_{q-1}+1 < a_q \le b_q$. Clearly, $|z(a)| = q$ be the number of subsequences in $A_0$.

If $M_u[A] = true$ then $|z(A)| \le |S(A)| \le 2Fk\Delta$ since $G_{p(u),u}$ can only be entered from a vertex of $b(u)$ which is incident with at most $\Delta$ edges in $G_{p(u),u}$ and each such edge can be used at most $f(e)-$times. Therefore, there are at most $(2kl\Delta)^{2Fk\Delta}$ sets $A$ such that $M_u[A] = true$, so we can store all such sets $A$ instead of whole table $M_u$ to ensure polynomial space.

We determine $M_u$ using the following dynamic programming. Let $u$ be a non-root vertex of $W$. 

If $u$ is a leaf, then $M_u[A] = true$ only for $A = \emptyset$. 

If $u$ is unique son of $p(u)$ and $b(p(u)) = b(u) \setminus \{v\}$ for some vertex $v$ of $G$ then $M_{p(u)}[A] = true$ iff $M_u[A'] = true$ where $A'$ obtained from $A$ but correcting the contribution of the linear order of edges associated with $b(p(u))$.

Let $u$ be the unique son of $p(u)$ and $b(p(u)) = b(u) \cup\{v\}$ for some vertex $v$ of $G$. We notice that no edge incident with $v$ belongs to $G_{p(u),u}$.
We construct sets $A$ for which $M_{p(u)}[A] = true$ 
by considering edges from $v$ to $b(u)$ one by one and for each such edge $e$ we perform the same construction as the one of $M'_v[A']$ from $M_v[A]$
in the proof of Theorem \ref{thm:fix_degree_traverse}.
     
Finally let $p(u)$ have two sons $u=u_1,u_2$. We know $b(p(u)) = b(u_1)= b(u_2)$. Let $S= E^s(G_{p(p(u_1),u_1}\cap E^s(G_{p(p(u_2),u_2})$. We observe: if $e\in S$ then $e$ is incident with a vertex of $b(p(u_1))$. 

We let again $M_{p(u)}[A] = false$ for each $A$ and do the following:

Consider all pairs $A_1, A_2$ such that $M_{u_1}[A_1] = true$ and $M_{u_2}[A_2] = true$. We first modify the elements of both $A_1, A_2$ to reflect the fixed linear order of the edges incident with a vertex of $b(p(v)$; this linear order may be different from the linear order (of thesame set) fixed for
$b(u_1)$ or for $b(u_2)$. Let the resulting sets be denoted by $A'_1, A'_2$. 

For $e\in S$ let $i(e)$ denote its index in the fixed linear order of the edges incident with a vertex in $b(p(u_1))$. For each such $i(e)$ let
$S_1= \{x; (x,i(e))\in A'_1\}$ and $S_2= \{x; (x,i(e))\in A'_2\}$. If $S_1\neq S_2$ then stop.

If $[A'_1]_0\cap [A'_2]_0$ contain any other element then stop.

If $A'= A'_1\cup A'_2$ does not satisfy the requirements given by function $g$ on the edges incident with a vertex of $b(p(u))$ then stop.

Let $M_{p(u)}[A'] = true$.

Finally, the algorithm returns $M_d[I]$.

\end{proof}



\end{document}